\def\be{\begin{equation}}
	\def\ee{\end{equation}}
\def\ba{\begin{array}}
	\def\ea{\end{array}}
\def\qed{\leavevmode\unskip\penalty9999 \hbox{}\nobreak\hfill
	\quad\hbox{\leavevmode  \hbox to.77778em{%
			\hfil\vrule   \vbox to.675em%
			{\hrule width.6em\vfil\hrule}\vrule\hfil}}
	\par\vskip3pt}
\newtheorem{theorem}{Theorem}
\begin{document}
	\title{\large\bf Note on quantum discord}

	\author{Yiding Wang, Xiaofen Huang and Tinggui Zhang$^{\dag}$}
	\affiliation{ School of Mathematics and Statistics, Hainan Normal University, Haikou, 571158, China \\
		$^{\dag}$ Correspondence to tinggui333@163.com}
	
	\bigskip
	\bigskip
	
\begin{abstract}
Quantum discord goes beyond entanglement and exists in a wide range of quantum states that may be separable, playing a crucial role in quantum information tasks. In this paper, we firstly proposed a zero-discord criterion for two-qubit system based on the partial transposition of density matrix, and then extended it to the qubit-qudit system. By detailed
examples we demonstrate the effectiveness of these criteria in detecting discord. Moreover, we provide an analytical lower bound of geometric quantum discord(GQD) using eigenvalue vectors of density matrix. Finally, we presented a one-way work deficit lower bound based on our lower bound of GQD.

Keywords: Quantum discord; partial transposition; geometric quantum discord; one-way work deficit
\end{abstract}
	
\pacs{04.70.Dy, 03.65.Ud, 04.62.+v} \maketitle
	
\section{I. Introduction}
One of the most prominent features of quantum mechanics is the existence of quantum correlations between quantum systems. The quantum entanglement \cite{rpmk} is widely studied and considered as a key resource in many quantum information processing such as quantum communications \cite{cgcraw,rh}, quantum computing \cite{asc,ar}, quantum cryptography \cite{a,ngwh} and quantum simulation \cite{s}. Besides quantum entanglement, the quantum discord \cite{oz,hv} has been discovered as a more fundamental quantum correlation than entanglement.

As a novel quantum correlation, discord was first proposed by Ollivier and Zurek \cite{oz,z}, and analyzed in detail by Vedral et al. \cite{hv,dvb}. Discord also illustrates its quantum advantages in diversified tasks such as mixed-state quantum computing \cite{kl,dsc}, quantum state merging \cite{md,cabm}, remote state preparation \cite{dlm} and optimal assisted state discrimination \cite{libo}. 

Let $\mathcal{H}_A$ ($\mathcal{H}_B$) be the Hilbert space associated with systems $A$ ($B$). For a bipartite quantum state $\rho\in\mathcal{H}_A\otimes\mathcal{H}_B$, the quantum mutual information is given by
$$
\mathcal{I}(\rho)=\mathcal{S}(\rho_A)+\mathcal{S}(\rho_B)-\mathcal{S}(\rho),
$$
where $\rho_A$ $(\rho_B)$ is the reduced density matrix in $\mathcal{H}_A$ $(\mathcal{H}_B)$, and $\mathcal{S}(\rho)=-Tr(\rho\log_2\rho)$ stands for the von Neumann entropy. Under a von Neumann measurement on subsystem $A$ given by projective operators $\{E_i^A\}$, the conditional state $\rho_i$ associated with the measurement outcome $i$ is given by
$$
\rho_i=\frac{1}{p_i}(E_i^A\otimes I)\rho(E_i^A\otimes I),
$$
where the probability $p_i=Tr[(E_i^A\otimes I)\rho(E_i^A\otimes I)]$ with $I$ the identity operator. The quantum mutual information related to this measurement is defined as \cite{l},
$$
\mathcal{I}(\rho|\{E_i^A\})=\mathcal{S}(\rho_B)-\mathcal{S}(\rho|\{E_i^A\}),
$$
where $\mathcal{S}(\rho|\{E_i^A\})=\sum_{i}p_i\mathcal{S}(\rho_i)$ is the quantum conditional entropy. The corresponding classical correlation is given by \cite{oz,hv,l},
$$
\mathcal{C}_A(\rho)=\mathop{\sup}_{\{E_i^A\}}\mathcal{I}(\rho|\{E_i^A\}),
$$
and the quantum discord is defined to be
$$
D_A(\rho)=\mathcal{I}(\rho)-\mathcal{C}_A(\rho).
$$

The quantum discord has been extensively investigated with potential applications in many physical systems \cite{bc,wdj,bzsm,hf,xg,ll,bff,ara,ylzh,gb,yzco,zyco,njby,jzy,rlb,bcxb,jxnj,szy,bqy,mbcp,lf}. Let us briefly introduce these results. According to the relationship between quantum entanglement and discord, the latter can be studied using the methods of the former. Like entanglement witnesses, one way to detect the discord is to use discord witnesses \cite{bc,ara,ylzh,gb,yzco,zyco,njby}. In \cite{bc}, Bylicka et al. proposed a witness for a quantum discord: if a $2\times N$ state is not strong positive partial transpose\,(SPPT) it must contain nonclassical correlations measured by quantum discord. The authors in \cite{ara} studied the quantum discord of two qubit X-states and investigated the relationship among the discord, classical correlation and entanglement. In \cite{gb} the authors proposed a method for the local detection of quantum correlations to detect the discord in bipartite systems when access is restricted to only one of the subsystems. A single observable to witness the non-zero quantum discord of an unknown state with four copies was presented in \cite{zyco}, wherein the expectation value of this observable provides a necessary condition of non-zero quantum discord for higher finite-dimensional bipartite systems. More recently, for any $N$-partite qudit states the authors in \cite{jzy} show that there exists such a hierarchy: genuine multipartite total correlations (GMT) $\supseteq$ genuine multipartite discord (GMD) $\supseteq$ genuine multipartite entanglement (GME) $\supseteq$ genuine multipartite steering (GMS) $\supseteq$ genuine multipartite nonlocality (GMNL). A generalization of quantum discord to multipartite systems is proposed in \cite{rlb}, there are some generalizations of it \cite{bcxb,jxnj}. The authors in \cite{szy} show that the partial transpose of density matrix can be used to detect discord, by surveying the changes of the spectrum of the density matrix after partial transpose. In \cite{bqy} the authors find that the quantum discord with weak measurements decays in a monotonic fashion and show that discord might exhibit a sudden change only for the projective measurements. As for the quantification of discord \cite{mbcp}, the geometric quantum discord (GQD), the minimum Hilbert-Schmidt distance between a given state and the set of discord-free states, is a widely used measure of discord \cite{dvb,lf}. Nevertheless, the analytical calculation of discord is challenging for general quantum states.

The paper is organized as follows. In the second section, we provide a discord-free criterion for two-qubit systems based on the partial transpose of the density matrix. In the third section, we extend this criterion to qubit-qudit systems. In the fourth section, we derive an analytical lower bound of geometric quantum discord by using eigenvalue vectors. In the fifth section, we propose a lower bound of one-way work deficit based on our lower bound of GQD. We summarize and discuss our conclusions in the last section.

\section{II. Partial transpose based discord criterion for two-qubit systems}
Partial transpose was originally used to detect quantum entanglement \cite{p}. It has been demonstrated that all strong PPT states are separable \cite{cjk}. In this section we present
a partial transpose based criterion to detect the discord of two-qubit states.
Any zero-discord state $\rho$, $D_A(\rho)=0$, can be written as \cite{mbcp},
\begin{equation}\label{e3}
		\rho_{AB}=\sum_{k}p_k|k\rangle\langle k|\otimes\rho_k^{\tau_B},
\end{equation}

where $\{|k\rangle\}$ is an orthogonal basis in $H_A$.
Let $\rho^{\tau_B}$ be the partial transposed matrix of $\rho$ with respect to subsystem B, where $(\rho^{\tau_B})_{ij,kl}=(\rho)_{il,kj}$. We have the following conclusion.

\begin{theorem}
If $D_A(\rho)=0$, then $PM_{nm}(\rho)=PM_{nm}(\rho^{\tau_B})$ for all $1\leq n<m\leq4$, where $PM_{nm}(\rho)$ represents the second-order principal minor composed by the elements that intersect the $n$-th row, $m$-th row and the $n$-th column, $m$-th column of the density matrix $\rho$.
\end{theorem}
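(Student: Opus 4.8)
\medskip
\noindent\emph{Proof idea.} The plan is to peel the asserted family of identities down to a single scalar equality, and then to read that equality off from the zero-discord normal form \eqref{e3}.

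I would begin with two trivial observations. Since $\rho$, and hence $\rho^{\tau_B}$, is Hermitian, every second-order principal minor has the form $PM_{nm}(\rho)=\rho_{nn}\rho_{mm}-\rho_{nm}\rho_{mn}=\rho_{nn}\rho_{mm}-|\rho_{nm}|^{2}$, and similarly for $\rho^{\tau_B}$. Moreover the relation $(\rho^{\tau_B})_{ij,kl}=\rho_{il,kj}$ gives at once $(\rho^{\tau_B})_{ij,ij}=\rho_{ij,ij}$, so partial transposition leaves every diagonal entry fixed. Hence, for each pair $n<m$, the equality $PM_{nm}(\rho)=PM_{nm}(\rho^{\tau_B})$ is equivalent to $|(\rho^{\tau_B})_{nm}|=|\rho_{nm}|$.

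Next I would dispose of most of the six pairs by inspection. Labelling the product basis by $1,2,3,4\leftrightarrow(1,1),(1,2),(2,1),(2,2)$ with the first index belonging to $A$, one checks that $\tau_B$ sends the $(1,2)$ and $(3,4)$ entries to $\rho_{mn}=\overline{\rho_{nm}}$ and fixes the $(1,3)$ and $(2,4)$ entries, so the desired equality of moduli is automatic in those four cases. Only the pair $\{(1,4),(2,3)\}$ is nontrivial, and for it $(\rho^{\tau_B})_{14}=\rho_{23}$ and $(\rho^{\tau_B})_{23}=\rho_{14}$; thus the whole theorem collapses to the single claim $|\rho_{11,22}|=|\rho_{12,21}|$.

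Finally I would establish this from \eqref{e3}. For a two-qubit state $H_A$ is two-dimensional, so \eqref{e3} reads $\rho=p_1\,|1\rangle\langle1|\otimes\rho_1+p_2\,|2\rangle\langle2|\otimes\rho_2$ for some orthonormal basis $\{|1\rangle,|2\rangle\}$ of $H_A$, with $|1\rangle\langle1|+|2\rangle\langle2|=I_A$. Writing $\mu$ for the $(1,2)$ matrix element of $|1\rangle\langle1|$ in the fixed basis, the $(1,2)$ element of $|2\rangle\langle2|$ equals $-\mu$. Expanding in the product basis then gives $\rho_{11,22}=\mu\bigl(p_1(\rho_1)_{12}-p_2(\rho_2)_{12}\bigr)$ and $\rho_{12,21}=\mu\bigl(p_1(\rho_1)_{21}-p_2(\rho_2)_{21}\bigr)$; since each $\rho_k$ is Hermitian and each $p_k$ real, the second bracket is the complex conjugate of the first, so $|\rho_{11,22}|=|\mu|\,|p_1(\rho_1)_{12}-p_2(\rho_2)_{12}|=|\rho_{12,21}|$, as required. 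I do not expect a genuine difficulty here: the only points needing care are keeping straight which matrix entries partial transposition moves and the elementary remark that a qubit's two orthogonal rank-one projectors differ only in the sign of their off-diagonal entry; everything else is a one-line substitution.
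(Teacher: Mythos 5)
Your proposal is correct and follows essentially the same route as the paper's proof: reduce the six principal-minor identities to the single equality $|\rho_{14}|=|\rho_{23}|$ (the other four pairs being fixed or merely conjugated by $\tau_B$), and then derive it from the zero-discord form by noting that completeness forces the off-diagonal entries of the two rank-one projectors on $H_A$ to be negatives of each other, so Hermiticity of the $\rho_k$ makes the two relevant matrix elements complex conjugates up to the common factor $\mu$. Your write-up is, if anything, slightly more explicit than the paper's about why the four easy cases are automatic.
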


\begin{proof}
For a general density matrix,
$$
\rho=\left(\begin{array}{cccc}
	\rho_{11} & \rho_{12} & \rho_{13} & \rho_{14} \\
	\rho_{12^*} & \rho_{22} & \rho_{23} & \rho_{24} \\
	\rho_{13^*} & \rho_{23^*} & \rho_{33} & \rho_{34} \\
	\rho_{14^*} & \rho_{24^*} & \rho_{34^*} & \rho_{44}
\end{array}\right),
$$
we have 
$$
\rho^{\tau_B}=\left(\begin{array}{cccc}
	\rho_{11} & \rho_{12^*} & \rho_{13} & \rho_{23} \\
	\rho_{12} & \rho_{22} & \rho_{14} & \rho_{24} \\
	\rho_{13^*} & \rho_{14^*} & \rho_{33} & \rho_{34^*} \\
	\rho_{23^*} & \rho_{24^*} & \rho_{34} & \rho_{44}
\end{array}\right).
$$
Therefore, from the definition of the principal minor, the invariance of $PM_{12}$, $PM_{13}$, $PM_{24}$ and $PM_{34}$ is trivial. We obtain $PM_{14}(\rho)=\rho_{11}\rho_{44}-|\rho_{14}|^2$ and $PM_{14}(\rho^{\tau_B})=\rho_{11}\rho_{44}-|\rho_{23}|^2$. From (\ref{e3}) $\rho$ can be written as
$$
\rho=\left(\begin{array}{cc}
	a_{11} & a_{12} \\
	a_{13} & a_{14} \\
\end{array}\right)\otimes
\left(\begin{array}{cc}
	b_{11} & b_{12} \\
	b_{12^*} & b_{14} \\
\end{array}\right)+
\left(\begin{array}{cc}
	a_{21} & a_{22} \\
	a_{23} & a_{24} \\
\end{array}\right)\otimes
\left(\begin{array}{cc}
	b_{21} & b_{22} \\
	b_{22^*} & b_{24} \\
\end{array}\right),
$$
where due to the property of orthonormal basis, $\sum_{k=1}^{2}|k\rangle\langle k|=I$, we have $a_{12}+a_{22}=0$. Thus
$|\rho_{14}|=|a_{12}(b_{12}-b_{22})|$ and $|\rho_{23}|=|a_{12}(b_{12}^*-b_{22}^*)|$.
According to the fact that the modulus of the product of complex numbers is equal to the product of the modulus, and that the modules of two complex numbers that are conjugate to each other are equal, we have $|\rho_{14}|=|\rho_{23}|$, namely, $PM_{23}(\rho)=PM_{23}(\rho^{\tau_B})$.
\end{proof}

Thus, from Theorem 1, one can know that after the operation of partial transposition, the change in second-order principal minors implies a non-trivial discord. We illustrate the usefulness and conciseness of our criterion through the following example.

Let us consider the Werner state \cite{w}
\begin{equation}
\rho_a=a|\psi^-\rangle\langle\psi^-|+\frac{1-a}{4}I_4,\nonumber
\end{equation}
where $|\psi^-\rangle=\frac{1}{\sqrt{2}}(|01\rangle-|10\rangle)$ is a maximally entangled state and $0\leq a\leq1$.

It is clear that as long as $a\neq0$, $\rho_a$ violates Theorem 1, which means that for any $0<a\leq1$, $\rho_a$ has quantum discord. According to the PPT criterion or realignment criterion, when $a>\frac{1}{3}$, $\rho_a$ is entangled. This indicates that entangled states must contain discord, but it is not necessary for the converse. The results obtained from our discord criterion are consistent with the results in \cite{ara}, where when $a>0$, $\rho_a$ has a non-trivial discord, and when $a>\frac{1}{3}$, the concurrence $C(\rho_a)>0$.

The Bell states given as $|\psi^{\pm}\rangle=\frac{1}{\sqrt{2}}(|01\rangle\pm|10\rangle)$ and $|\phi^{\pm}\rangle=\frac{1}{\sqrt{2}}(|00\rangle\pm|11\rangle)$. For these particular states, discord and any entanglement measure coincide and are equal to the maximum value of the correlation. The authors proposed in \cite{ara} that if we mix any two Bell states, the entanglement of formation \cite{bdsw,wkw} and quantum discord are equal. For example, consider $\rho_b=b|\psi^-\rangle\langle\psi^-|+(1-b)|\phi^+\rangle\langle\phi^+|$, by calculation, Theorem 1 will not be violated only when $b=\frac{1}{2}$. That is to say, when $b$ takes other values, the quantum state $\rho_b$ has non-trivial discord. This result do agree with the above analysis. It is noteworthy that convex combination $\frac{1}{2}(\rho_1+\rho_2)$ of arbitrary two Bell states has vanishing discord, this is a magical result. See Fig. 1.
\begin{figure}[htbp]
	\centering
	\includegraphics[width=0.5\textwidth]{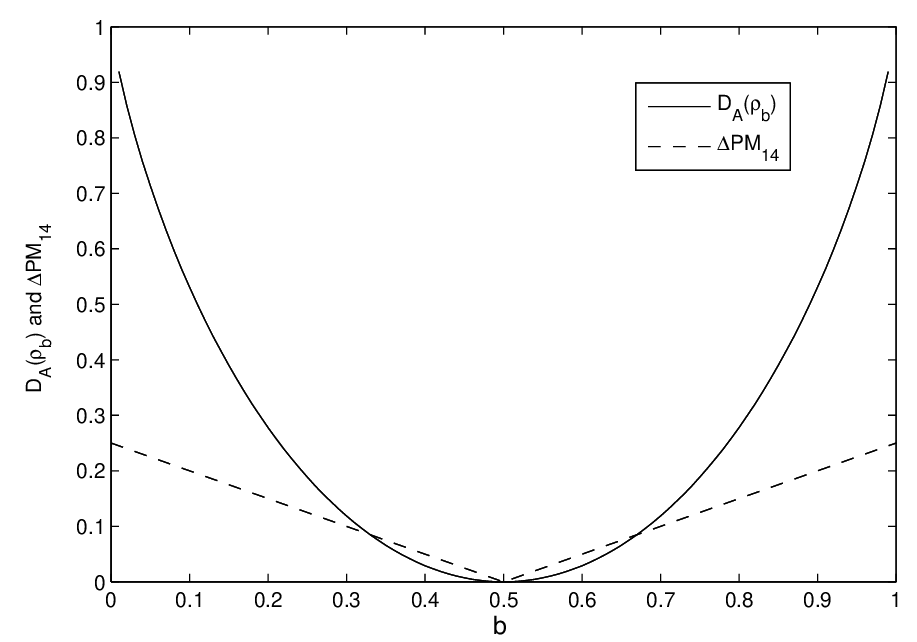}
	\vspace{-1.2em} \caption{As shown in the figure, the solid line represents $D_A(\rho_b)$, and the dashed line represents the change in the principal minor $\Delta PM_{14}=|PM_{14}(\rho)-PM_{14}(\rho_b^{\tau_B})|$. When $b=\frac{1}{2}$, $\rho_b$ is a zero-discord state.} \label{Fig.1}
\end{figure}
\section{III. Discord criterion in qubit-qudit system}
We are now considering extending our criterion to $2\otimes n$ system. For simplicity, one consider the density matrix $\rho$ in $2\otimes3$ system, generally $\rho$ can also be written as
$$
\rho=\left(\begin{array}{cccccc}
	\rho_{11} & \rho_{12} & \rho_{13} & \rho_{14} & \rho_{15} & \rho_{16} \\
	\rho_{12^*} & \rho_{22} & \rho_{23} & \rho_{24} & \rho_{25} & \rho_{26} \\
	\rho_{13^*} & \rho_{23^*} & \rho_{33} & \rho_{34} & \rho_{35} & \rho_{36} \\
	\rho_{14^*} & \rho_{24^*} & \rho_{34^*} & \rho_{44} & \rho_{45} & \rho_{46} \\
	\rho_{15^*} & \rho_{25^*} & \rho_{35^*} & \rho_{45^*} & \rho_{55} & \rho_{56} \\
	\rho_{16^*} & \rho_{26^*} & \rho_{36^*} & \rho_{46^*} & \rho_{56^*} & \rho_{66} \\
\end{array}\right).
$$
After deleting the first and fourth rows as well as the first and fourth columns of matrix $\rho$, we have
$$
\rho_{/14}=\left(\begin{array}{cccc}
	\rho_{22} & \rho_{23} & \rho_{25} & \rho_{26} \\
	\rho_{23^*} & \rho_{33} & \rho_{35} & \rho_{36} \\
	\rho_{25^*} & \rho_{35^*} & \rho_{55} & \rho_{56} \\
	\rho_{26^*} & \rho_{36^*} & \rho_{56^*} & \rho_{66} \\
	
\end{array}\right).
$$
On the one hand, it is clear that $\rho_{/14}$ is still a Hermitian matrix (generally, it is no longer a quantum state) and the set of all its second-order principal minors is a subset of all second-order principal minors of matrix $\rho$.

If we delete the second and fifth rows as well as the second and fifth columns of matrix $\rho$, we obtain
$$
\rho_{/25}=\left(\begin{array}{cccc}
	\rho_{11} & \rho_{13} & \rho_{14} & \rho_{16} \\
	\rho_{13^*} & \rho_{33} & \rho_{34} & \rho_{36} \\
	\rho_{14^*} & \rho_{34^*} & \rho_{44} & \rho_{46} \\
	\rho_{16^*} & \rho_{36^*} & \rho_{46^*} & \rho_{66} \\
	
\end{array}\right).
$$
By deleting the third, sixth row, and the third, sixth column, one can get
$$
\rho_{/36}=\left(\begin{array}{cccc}
	\rho_{11} & \rho_{12} & \rho_{14} & \rho_{15} \\
	\rho_{12^*} & \rho_{22} & \rho_{24} & \rho_{25} \\
	\rho_{14^*} & \rho_{24^*} & \rho_{44} & \rho_{45} \\
	\rho_{15^*} & \rho_{25^*} & \rho_{45^*} & \rho_{55} \\
	
\end{array}\right).
$$
Making a discussion similar to $\rho_{/14}$, we can obtain that all second-order principal minors of $\rho_{/14}$, $\rho_{/25}$, and $\rho_{/36}$ traverse all second-order principal minors of $\rho$.

Based on the above discussion, for the $2\otimes3$ dimensional system, we provide the following results.
\begin{theorem}
For a $2\otimes3$ quantum state $\rho$, if $D_A(\rho)=0$, all its second-order principal minors remain unchanged after partial transposition.
\end{theorem}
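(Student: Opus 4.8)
The plan is to deduce Theorem~2 from Theorem~1 by applying the latter to the three $4\times4$ compressions $\rho_{/14},\rho_{/25},\rho_{/36}$ exhibited above. The first thing to notice is that the three pairs of deleted lines, $\{1,4\},\{2,5\},\{3,6\}$, are exactly the pairs $\{(1,j),(2,j)\}$, $j=1,2,3$, grouping the two $A$-copies of a single basis vector $|j\rangle$ of $\mathcal H_B$; hence each $\rho_{/jk}$ (with $jk\in\{14,25,36\}$) is the compression of $\rho$ to $\mathcal H_A\otimes\mathcal H_{B'}$, where $\mathcal H_{B'}$ is the $2$-dimensional coordinate subspace of $\mathcal H_B$ spanned by the two surviving $B$-basis vectors. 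I would then record two structural facts. \emph{(i)} If $D_A(\rho)=0$ then by~(\ref{e3}) $\rho=\sum_{k=1}^{2}p_k|k\rangle\langle k|\otimes\eta_k$ with $\{|k\rangle\}$ orthonormal in $\mathcal H_A$ and $\eta_k$ density matrices on $\mathcal H_B$, so $\rho_{/jk}=\sum_k p_k|k\rangle\langle k|\otimes\sigma_k$ where $\sigma_k$ is the corresponding $2\times2$ principal submatrix of $\eta_k$; being a principal submatrix of a density matrix, $\rho_{/jk}$ is positive semidefinite, and dividing by its trace (or, if the trace is zero, noting $\rho_{/jk}=0$) makes it a genuine two-qubit classical--quantum, hence zero-discord, state. \emph{(ii)} Partial transposition on $B$ commutes with this compression: for $b,b'$ among the two surviving $B$-indices, $(\rho^{\tau_B})_{(a,b),(a',b')}=\rho_{(a,b'),(a',b)}$ involves only entries of $\rho$ whose $B$-indices survive, so $(\rho^{\tau_B})_{/jk}=(\rho_{/jk})^{\tau_{B'}}$.

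Granting (i) and (ii) the argument closes quickly. Apply Theorem~1 to the normalized two-qubit zero-discord state $\rho_{/jk}/\operatorname{Tr}\rho_{/jk}$: all its second-order principal minors are unchanged by $\tau_{B'}$; since a positive rescaling multiplies every second-order principal minor by a common positive constant, we get $PM_{nm}(\rho_{/jk})=PM_{nm}\!\big((\rho_{/jk})^{\tau_{B'}}\big)$ for every pair $n<m$ of line indices of $\rho_{/jk}$ (trivially so when $\rho_{/jk}=0$). A second-order principal minor of a principal submatrix equals the corresponding second-order principal minor of the full matrix, so the left side is $PM_{nm}(\rho)$, while by (ii) the right side is $PM_{nm}\!\big((\rho^{\tau_B})_{/jk}\big)=PM_{nm}(\rho^{\tau_B})$. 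Running this over $jk=14,25,36$ and using the fact established before the theorem that the second-order principal minors of $\rho_{/14},\rho_{/25},\rho_{/36}$ together exhaust all second-order principal minors of $\rho$, we conclude $PM_{nm}(\rho)=PM_{nm}(\rho^{\tau_B})$ for all $1\le n<m\le 6$.

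The step I expect to be the real content is fact~(ii) together with the observation that only the special index pairs $\{(1,j),(2,j)\}$ make the compression coincide with a coordinate subspace of $\mathcal H_B$ --- for any other choice of deleted lines partial transposition would not commute with the compression and the reduction to Theorem~1 would fail. The rest is routine: principal submatrices of positive semidefinite matrices are positive semidefinite, the classical--quantum form is inherited by the compressions, and a positive rescaling does not affect which principal minors are equal. As a consistency check one could also argue directly in $2\times2$ block form $\bigl(\begin{smallmatrix}P&Q\\Q^{\dagger}&R\end{smallmatrix}\bigr)$ with $3\times3$ blocks: the completeness relation $\sum_k|k\rangle\langle k|=I_2$ forces $Q=\gamma(p_1\eta_1-p_2\eta_2)$ for a scalar $\gamma$ with $p_1\eta_1-p_2\eta_2$ Hermitian, and since $\tau_B$ sends $Q$ to $Q^{T}$ while fixing the diagonal, $|Q_{bc}|=|Q_{cb}|$ gives invariance of the cross-block minors, the within-block minors being invariant for any Hermitian matrix.
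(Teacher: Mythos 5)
Your proof is correct and follows essentially the same route as the paper's: both reduce Theorem~2 to Theorem~1 via the three compressions $\rho_{/14},\rho_{/25},\rho_{/36}$, which inherit the classical--quantum form $\sum_k p_k|k\rangle\langle k|\otimes(\cdot)$ and whose second-order principal minors together exhaust those of $\rho$. You are more explicit than the paper about why partial transposition commutes with precisely these compressions and about normalization, but this is a clarification of the same argument rather than a different one.
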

\begin{proof}
$D_A(\rho)=0$ means that there exists a set of orthonormal bases $\{|l\rangle\}$ in $H_A$ such that
\begin{equation}\label{e8}
	\rho=\sum_{l=1}^2q_l|l\rangle\langle l|\otimes\rho_l^B,
\end{equation}
where $\rho_l^B$ are 3-order density matrices in $H_B$. Based on the above derivation process, we can obtain
\begin{eqnarray}
\rho_{/14}&=&\sum_{l}q_l|l\rangle\langle l|\otimes\rho_{l/1}^B,\\
\rho_{/25}&=&\sum_{l}q_l|l\rangle\langle l|\otimes\rho_{l/2}^B,\\
\rho_{/36}&=&\sum_{l}q_l|l\rangle\langle l|\otimes\rho_{l/3}^B.
\end{eqnarray}
Where $\rho_{l/i}^B$ is the second-order Hermitian matrix obtained by erasing the i-th row and i-th column of $\rho_l^B$. Make a similar deduction to the proof process of Theorem 1 for $\rho_{/14}$, $\rho_{/25}$, and $\rho_{/36}$, it can be concluded that all their second-order principal minors are invariant under partial transposition. Note that all second-order principal minors of $\rho_{/14}$, $\rho_{/25}$, and $\rho_{/36}$ traverse all second-order principal minors of $\rho$.
Therefore, the theorem is proven.
\end{proof}
The results of our derivation in $2\otimes 3$ quantum system can be naturally extended to $2\otimes n$ system. To perform similar operations, we need to eliminate the $2n-4$ rows and $2n-4$ columns of the $2\otimes n$ density matrix. In fact, this is equivalent to selecting the fourth-order principal sub-matrix of the density matrix according to certain rules.

\begin{theorem}
For a quantum state $\rho$ in $2\otimes n$ system, if $D_A(\rho)=0$, then all second-order principal minors of matrix $\rho$ remain unchanged under partial transposition.
\end{theorem}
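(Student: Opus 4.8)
The plan is to reduce the claim to the two-qubit computation of Theorem 1 by restricting to suitable fourth-order principal submatrices, as indicated in the remark above. Writing $\rho$ in the computational basis of a $2\otimes n$ system as a block matrix
\begin{equation}
\rho=\left(\begin{array}{cc} P & B \\ B^\dagger & Q\end{array}\right)\nonumber
\end{equation}
with $n\times n$ blocks, a second-order principal minor of $\rho$ is labelled by a pair of combined indices $(i,\mu),(j,\nu)$, with $i,j\in\{1,2\}$ labelling the qubit and $\mu,\nu\in\{1,\dots,n\}$ the qudit. The first step is the elementary observation that any such pair is contained in the $4\times4$ principal submatrix $\sigma$ built from the rows and columns $(1,\mu_1),(1,\mu_2),(2,\mu_1),(2,\mu_2)$ for an appropriate $\mu_1\neq\mu_2$ (take $\{\mu_1,\mu_2\}\supseteq\{\mu,\nu\}$, enlarging by an arbitrary index if $\mu=\nu$). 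Moreover, because $\tau_B$ acts only by permuting and conjugating qudit indices, partially transposing this $4\times4$ window and then extracting it from $\rho$ gives the same matrix as extracting the analogous window from $\rho^{\tau_B}$; so it suffices to prove that every second-order principal minor of $\sigma$ is $\tau_B$-invariant.

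Second, I would check that $\sigma$ inherits the structural form exploited in the proof of Theorem 1. From $D_A(\rho)=0$ we have the classical--quantum decomposition $\rho=\sum_{l=1}^2 q_l|l\rangle\langle l|\otimes\rho_l^B$ for some orthonormal basis $\{|l\rangle\}$ of $H_A$; restricting the qudit factor to $V=\mathrm{span}(|\mu_1\rangle,|\mu_2\rangle)$ yields $\sigma=\sum_{l=1}^2 q_l|l\rangle\langle l|\otimes(\Pi_V\rho_l^B\Pi_V)$, where each $\Pi_V\rho_l^B\Pi_V$ is a $2\times2$ Hermitian (positive semidefinite) matrix. This is exactly the hypothesis needed for the argument of Theorem 1, whose computation never uses that the state is normalized: if $a_{12}$ denotes the off-diagonal entry of $|1\rangle\langle1|$, then $|1\rangle\langle1|+|2\rangle\langle2|=I_2$ forces the off-diagonal entry of $|2\rangle\langle2|$ to be $-a_{12}$, so the off-diagonal $2\times2$ block of $\sigma$ equals $a_{12}\bigl(q_1\Pi_V\rho_1^B\Pi_V-q_2\Pi_V\rho_2^B\Pi_V\bigr)$, i.e. a scalar times a Hermitian matrix. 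Hence its $(1,2)$ and $(2,1)$ entries have equal modulus, and — exactly as in Theorem 1 — all second-order principal minors of $\sigma$ are invariant under $\tau_B$ (the within-block minors only involve the Hermitian diagonal blocks and are trivially invariant).

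Putting the two steps together: a given second-order principal minor of $\rho$ equals the corresponding minor of a $4\times4$ window $\sigma$; by Step 2 that minor is unchanged under $\tau_B$ applied to $\sigma$; and by Step 1 it therefore has the same value whether computed from $\rho$ or from $\rho^{\tau_B}$. Since the minor was arbitrary, the theorem follows, and the $2\otimes2$ and $2\otimes3$ results appear as special cases of the same bookkeeping. I expect the only real subtlety to be Step 1 — verifying carefully that the family of fourth-order ``windows'' of the form $(1,\mu_1),(1,\mu_2),(2,\mu_1),(2,\mu_2)$ simultaneously exhausts all second-order principal minors of $\rho$ and commutes with partial transposition — since once that combinatorial reduction is in place, the rest is a direct transcription of the two-qubit argument.
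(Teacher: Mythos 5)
Your proposal is correct and follows essentially the same route as the paper: the paper's proof also reduces to the fourth-order principal submatrices $\rho_s^{i,j,i+n,j+n}$ (your windows $(1,\mu_1),(1,\mu_2),(2,\mu_1),(2,\mu_2)$), notes that these exhaust all second-order principal minors, and invokes the Theorem~1 computation on each window. The only difference is that you spell out the two points the paper leaves implicit — that window extraction commutes with $\tau_B$ and that each window inherits the classical--quantum form $\sum_l q_l|l\rangle\langle l|\otimes\Pi_V\rho_l^B\Pi_V$ — which is a welcome tightening rather than a new argument.
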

\begin{proof}
Without losing generality, we assume that $1\leq i<j<k<l\leq 2n$, and the principal submatrix composed of the $i$-th, $j$-th, $k$-th, $l$-th rows, as well as the $i$-th, $j$-th, $k$-th, $l$-th columns of matrix $\rho$ is represented as $\rho_s^{i,j,k,l}$. Our rule is that for any $i$ and $j$ that satisfy $1\leq i<j\leq n$, the corresponding fourth-order principal submatrix is selected as $\rho_s^{i,j,i+n,j+n}$. For any given $i$ and $j$, after partial transposition, all second-order principal minors of $\rho_s^{i,j,i+n,j+n}$ remain unchanged. It is easy to see that the second-order principal minors of all the principal submatrices selected according to this rule traverse all the second-order principal minors of matrix $\rho$. Thus, the proof of the theorem is completed.
\end{proof}
Consider the quantum states with two real parameters $u$ and $c$ in $2\otimes3$ system given in \cite{dl}
\begin{align}\label{ex1}
\rho_{u,c}=&u(|02\rangle\langle02|+|12\rangle\langle12|)+f(|\phi^+\rangle\langle\phi^+|+|\phi^-\rangle\langle\phi^-|\nonumber\\
          &+|\psi^+\rangle\langle\psi^+|)+c|\psi^-\rangle\langle\psi^-|,
\end{align}
where parameter $f$ is dependent on $u$ and $c$ by the unit trace condition,
\begin{equation*}
2u+3f+c=1.
\end{equation*}
According to Eq.(\ref{ex1}), one can obtain that the range of parameters are $0\leq u\leq\frac{1}{2}$, $0\leq c\leq1$ and $0\leq f\leq\frac{1}{3}$. After partial transpose,
$$
\rho_{u,c}^{\tau_B}=\left(\begin{array}{cccccc}
	f & 0 & 0 & 0 & \frac{f-c}{2} & 0 \\
	0 & \frac{f+c}{2} & 0 & 0 & 0 & 0 \\
	0 & 0 & u & 0 & 0 & 0 \\
	0 & 0 & 0 & \frac{f+c}{2} & 0 & 0 \\
	\frac{f-c}{2} & 0 & 0 & 0 & f & 0 \\
	0 & 0 & 0 & 0 & 0 & u \\
\end{array}\right).
$$
All eigenvalues of $\rho_{u,c}^{\tau_B}$ are non negative if and only if $\frac{3f-c}{2}=\frac{1}{2}-u-c\geq0$. Thus, the range of the parameters $u$ and $c$ consists of two triangular regions: the PPT separable region satisfying $0\leq u+c\leq\frac{1}{2}$ and the NPT entangled region satisfying $\frac{1}{2}\leq u+c\leq 1$, see Fig. 2. By using our discord criterion, we can know that the quantum state $\rho_{u,c}$ is discord-free if and only if $c=f$. Obviously, this situation is included within the range of the PPT. This also means that discord exists in a wider range of quantum states that may be separable.
\begin{figure}[htbp]
	\centering
	\includegraphics[width=0.54\textwidth]{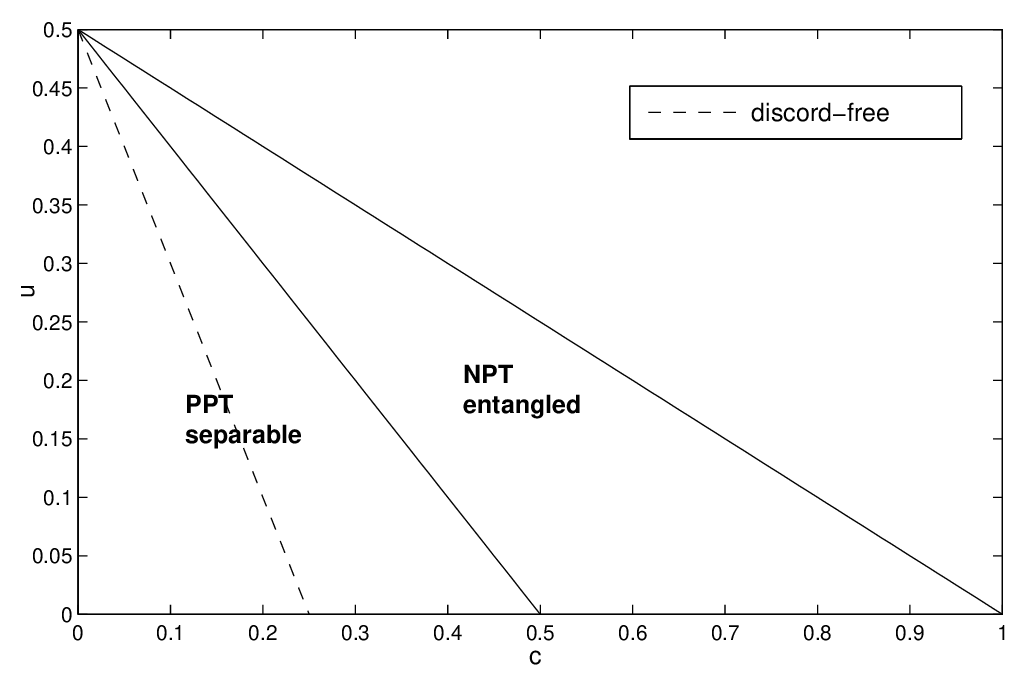}
	\vspace{-2em} \caption{As shown in the figure, the dashed line represents the distribution of all discord-free states for $\rho_{u,c}$, which are included in the PPT separable range.} \label{Fig.2}
\end{figure}
\section{IV. Analytic Lower Bound of GQD}
Geometric quantum discord (GQD) is a noteworthy discord measure defined as the minimum Hilbert-Schmidt distance between a given state $\rho$ and the set of zero-discord states represented by $\mathcal{D}_z$:
\begin{equation*}
D_{G}=\mathop{\min}_{\sigma\in\mathcal{D}_z}\|\rho-\sigma\|_2^2.
\end{equation*}
We arrange the eigenvalues of the $M\otimes N$ density matrix $\rho$ in descending order to form an eigenvalue vector $\overrightarrow{\lambda(\rho)}=(\lambda_1,\lambda_2,...,\lambda_{MN})$, which is abbreviated as $\overrightarrow{\lambda}$ without causing ambiguity. The authors in \cite{rb} point out that the following inequality holds:
\begin{equation}\label{e9}
\|\rho-\sigma\|_2^2\geq\|\overrightarrow{\lambda(\rho)}-\overrightarrow{\lambda(\sigma)}\|_2^2.
\end{equation}

Now, let us discuss the lower bound of GQD. Due to the invariance of the Hilbert-Schmidt norm under partial transposition, we have
\begin{eqnarray}
D_G&=&\frac{1}{2}\mathop{\min}_{\sigma\in\mathcal{D}_z}(\|\rho-\sigma\|_2^2+\|\rho-\sigma\|_2^2)\nonumber\\
   &=&\frac{1}{2}(\|\rho-\sigma_{min}\|_2^2+\|\rho-\sigma_{min}\|_2^2)\nonumber\\
   &=&\frac{1}{2}(\|\rho-\sigma_{min}\|_2^2+\|\rho^{\tau_B}-\sigma_{min}^{\tau_B}\|_2^2)\nonumber\\
   &\geq&\frac{1}{2}(\|\overrightarrow{\lambda}-\overrightarrow{\lambda}_{min}\|_2^2+\|\overrightarrow{\lambda}^{'}-\overrightarrow{\lambda}_{min}\|_2^2)\nonumber\\
   &=&\frac{1}{2}\sum_{i}[2\lambda_{min,i}^2-2(\lambda_i+\lambda_i^{'})\lambda_{min,i}+\lambda_i^2+\lambda_i^{'2}]\nonumber\\
   &:=&\frac{1}{2}\sum_{i}L_i,\nonumber
\end{eqnarray}
where $\sigma_{min}$ specifies the closet discord-free state of $\rho$ with respect to the 2-norm, the inequality is due to (\ref{e9}) and the fact that the spectrum of zero-discord state remains unchanged under partial transposition \cite{szy}, and $\overrightarrow{\lambda}^{'}$ represents the eigenvalues vector of $\rho^{\tau_B}$. It is clear that if $\rho$ is a zero-discord state, then $L_i=0$, $i=1,2,...,MN$. We first consider calculating the minimum value of $L_1$ within the range of $\lambda_{min,1}\in[\frac{1}{MN},1]$, with the minimum point still denoted as $\lambda_{min,1}$ and the minimum value denoted as $\mathcal{L}_1$. Then, calculate the minimum value of $L_2$ on $\lambda_{min,2}\in[\frac{1-\lambda_{min,1}}{MN-1},\min\{1-\lambda_{min,1},\lambda_{min,1}\}]$, with the minimum point still denoted as $\lambda_{min,2}$ and the minimum value denoted as $\mathcal{L}_2$. Repeat the above operation. If $j$ $(1\leq j\leq MN)$ causes $1-\sum_{i=1}^{j}\lambda_{min,i}=0$, then $\lambda_{min,j+1}=\lambda_{min,j+2}=...=\lambda_{min,MN}=0$. Each of the above steps only involves the problem of finding the minimum value of a quadratic function within a given interval, so it is always possible to do so. The main results of this section are presented below.
\begin{theorem}
For any $M\otimes N$ quantum state $\rho$, the lower bound of GQD is
\begin{equation*}
D_G\geq\frac{1}{2}\sum_{i=1}^{MN}\mathcal{L}_i.
\end{equation*}
\end{theorem}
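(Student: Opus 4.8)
The plan is to start from the chain of relations displayed immediately before the statement and supply the justification for each line. The first two equalities are trivial, and the third uses the invariance of the Hilbert--Schmidt norm under partial transposition, so that $\|\rho-\sigma_{min}\|_2^2=\|\rho^{\tau_B}-\sigma_{min}^{\tau_B}\|_2^2$. The inequality on the fourth line then applies (\ref{e9}) to each summand: directly to $\|\rho-\sigma_{min}\|_2^2$, and to $\|\rho^{\tau_B}-\sigma_{min}^{\tau_B}\|_2^2$ after recalling that $\sigma_{min}$ is a zero-discord state, whose spectrum is unchanged under partial transposition \cite{szy}; hence $\overrightarrow{\lambda(\sigma_{min}^{\tau_B})}=\overrightarrow{\lambda(\sigma_{min})}$, so the same vector $\overrightarrow{\lambda}_{min}$ appears in both terms. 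Expanding the two squared norms coordinatewise and collecting terms produces $D_G\geq\frac{1}{2}\sum_i L_i$ with $L_i=2\lambda_{min,i}^2-2(\lambda_i+\lambda_i^{\prime})\lambda_{min,i}+\lambda_i^2+\lambda_i^{\prime 2}$, which is exactly where the excerpt stops.

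Next I would exploit that $L_i$ depends only on the scalar $\lambda_{min,i}$ and that $\overrightarrow{\lambda}_{min}$, being the spectrum of a density matrix on an $MN$-dimensional space listed in decreasing order, is constrained by $1\geq\lambda_{min,1}\geq\lambda_{min,2}\geq\cdots\geq\lambda_{min,MN}\geq0$ together with $\sum_i\lambda_{min,i}=1$. From these constraints alone one reads off the admissible window for each coordinate in terms of the previous ones: $\lambda_{min,1}\in[\frac{1}{MN},1]$ (the largest eigenvalue of a normalized state lies between the mean $1/MN$ and $1$), and then inductively $\lambda_{min,i}\in[\frac{1-\sum_{j<i}\lambda_{min,j}}{MN-i+1},\ \min\{1-\sum_{j<i}\lambda_{min,j},\ \lambda_{min,i-1}\}]$, since the remaining $MN-i+1$ eigenvalues are nonnegative, nonincreasing, and sum to $1-\sum_{j<i}\lambda_{min,j}$; if the running sum ever reaches $1$, the remaining coordinates are forced to vanish. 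On each such window $L_i$ is a parabola, so its minimum $\mathcal{L}_i$ is attained at the vertex or at an endpoint --- a one-variable exercise --- and $L_i\geq\mathcal{L}_i$ by definition. Summing over $i$ gives $\sum_i L_i\geq\sum_i\mathcal{L}_i$, hence $D_G\geq\frac{1}{2}\sum_i\mathcal{L}_i$.

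The delicate point, and the one I expect to be the real obstacle, is the legitimacy of this sequential (``greedy'') minimization: the window for $\lambda_{min,i}$ involves the earlier coordinates, so one must check that replacing them by the minimizers chosen at the previous steps still yields a genuine lower bound for the true, unknown $\overrightarrow{\lambda}_{min}$. The cleanest way I see to control this is to complete the square, $L_i(v)=2(v-\mu_i)^2+\frac{1}{2}(\lambda_i-\lambda_i^{\prime})^2$ with $\mu_i=\frac{1}{2}(\lambda_i+\lambda_i^{\prime})$, and to note that $(\mu_i)_i$ is itself a nonincreasing unit-sum vector (an average of two such) whose entries lie in $(0,1]$ on a prefix and are nonpositive only on the complementary tail. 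When every $\mu_i\geq0$ one checks, using that a finite maximum dominates the corresponding average and that a tail sum dominates its leading term, that $\mu_i$ lies in the $i$-th window at every step; the greedy then returns $\lambda_{min,i}=\mu_i$ and $\mathcal{L}_i=\frac{1}{2}(\lambda_i-\lambda_i^{\prime})^2=\min_v L_i(v)$, so $\mathcal{L}_i\leq L_i(\lambda_{min,i})$ for the true value and the summed bound is immediate. The remaining case, in which some $\mu_i$ are negative so that the nonnegativity (and possibly ordering) constraints become active and force part of $\overrightarrow{\lambda}_{min}$ upward, is where the careful bookkeeping lives: one has to identify which constraint is tight at which index and verify, via the monotonicity of each parabola away from its vertex, that the resulting corrections keep $\sum_i\mathcal{L}_i$ below $\sum_i L_i$ evaluated at any admissible spectrum. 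Once that case analysis is settled the theorem follows, as every other ingredient is a routine chain of norm inequalities and elementary quadratic optimizations.
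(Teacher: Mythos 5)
Your reconstruction of the chain of inequalities down to $D_G\geq\frac{1}{2}\sum_i L_i$ is exactly the paper's argument, and you have located the crux correctly: everything hinges on whether the sequentially computed $\sum_i\mathcal{L}_i$ really lower-bounds $\sum_i L_i$ evaluated at the true, unknown spectrum of $\sigma_{min}$. The paper offers no justification for this step at all (it merely describes the greedy procedure and asserts the theorem), so you have in fact identified a gap in the source as well as in your own write-up. Unfortunately the case analysis you defer cannot be completed, because the statement you would need is false. Writing $L_i(v)=2(v-\mu_i)^2+\frac{1}{2}(\lambda_i-\lambda_i')^2$ with $\mu_i=\frac{1}{2}(\lambda_i+\lambda_i')$ as you propose, the required claim is that the greedy point minimizes $\sum_i(v_i-\mu_i)^2$ over all nonincreasing probability vectors $\overrightarrow{v}$; it does not, since each coordinate is optimized against a window generated by the \emph{previous greedy choices} rather than jointly. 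The paper's own example is already a counterexample: for $|\phi^+\rangle$ with $d=2$ one has $\mu=(\tfrac34,\tfrac14,\tfrac14,-\tfrac14)$, the greedy returns $(\tfrac34,\tfrac14,0,0)$ and $\sum_i\mathcal{L}_i=\tfrac34$, yet the admissible spectrum $v=(\tfrac23,\tfrac16,\tfrac16,0)$ --- realizable by the zero-discord state $\operatorname{diag}(\tfrac23,\tfrac16,\tfrac16,0)$ in a product basis --- gives $\sum_i L_i(v_i)=2\cdot\tfrac{1}{12}+\tfrac12=\tfrac23<\tfrac34$. So if the closest zero-discord state had such a spectrum the inequality $\sum_i L_i(\lambda_{min,i})\geq\sum_i\mathcal{L}_i$ would fail; it happens to hold for the Bell state only because its actual $\sigma_{min}$ has spectrum $(\tfrac12,\tfrac12,0,0)$, a fact the derivation never uses. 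Note also that this problematic case (some $\mu_i<0$, forced by negative eigenvalues of $\rho^{\tau_B}$) is not exotic --- it is exactly the regime of the paper's central example, so your ``nice case'' with all $\mu_i\geq0$ covers essentially none of the interesting states.

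What survives of this route is the bound obtained by minimizing $\sum_i L_i(v_i)$ \emph{jointly} over admissible spectra, namely $D_G\geq\frac14\|\overrightarrow{\lambda}-\overrightarrow{\lambda}^{'}\|_2^2$ plus the squared distance from $\mu$ to the ordered simplex; for the Bell state this yields $\frac12(\frac16+\frac12)=\frac13$, which coincides with the bound $L_{PPT}^{D}=\frac13$ of \cite{szy} that the theorem claims to improve. In other words, the advertised improvement from $\frac13$ to $0.375$ is precisely the artifact of the unjustified greedy step. Your proposal is therefore faithful to the paper but, like the paper, does not prove the theorem; to repair it one must either replace the sequential minimization by the joint one (weakening the bound) or supply additional information about the spectrum of the optimal $\sigma_{min}$ that neither you nor the authors establish.
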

We consider the maximum entangled state $|\phi^+\rangle=\sum_{i=1}^{d}\frac{1}{\sqrt{d}}|ii\rangle$, and the exact value of GQD is $1-\frac{1}{d}$. When $d=2$, we have $\overrightarrow{\lambda}^{'}=(\frac{1}{2},\frac{1}{2},\frac{1}{2},-\frac{1}{2})$ and $\overrightarrow{\lambda}=(1,0,0,0)$. By calculation, we can obtain
\begin{equation*}
D_G\geq \frac{1}{2}\sum_{i=1}^{4}\mathcal{L}_i=\frac{1}{2}(0.125+0.125+0.25+0.25)=0.375.
\end{equation*}
Our lower bound of 0.375 is larger than the lower bound given in \cite{szy}, where $L_{PPT}^D=L_{SIPT}^D=\frac{1}{3}$. When $d=3$, $\frac{1}{2}\sum_{i=1}^{9}\mathcal{L}_i=\frac{1}{2}=L_{PPT}^D>L_{SIPT}^D=\frac{11}{24}$.
\section{V. Lower Bound of one-way work deficit}
For an $M\otimes N$ quantum state $\rho_{AB}$, it was shown that the works extractable by closed operations (CO) and closed local operations and classical communication (CLOCC) are respectively $I_{CO}$ and $I_{CLOCC}$ \cite{ohhh,jau,aub,d}, given by
\begin{eqnarray*}
I_{CO}(\rho_{AB})&=&\log_2MN-S(\rho_{AB}),\\
I_{CLOCC}(\rho_{AB})&=&\log_2MN-\mathop{\min}_{\{\Pi_i^A\}}S(\rho_{AB}^{'}),
\end{eqnarray*}
where $\rho_{AB}^{'}=\sum_{i}\Pi_i^A\otimes\mathbb{I}_N^B\rho_{AB}\Pi_i^A\otimes\mathbb{I}_N^B$ and $\{\Pi_i^A\}$ represents the dephasing operation acting on Alice's side. Here, the minimization is performed over all projective measurements on the system at A. The difference between $I_{CO}$ and $I_{CLOCC}$ is defined as the one-way work deficit \cite{ohhh,bdsr}, given by
\begin{eqnarray*}
\mathcal{WD}_A&:=&I_{CO}-I_{CLOCC}\\
    &=&\mathop{\min}_{\{\Pi_i^A\}}S(\rho_{AB}\|\rho_{AB}^{'}).
\end{eqnarray*}
Where $S(\rho\|\sigma):=Tr(\rho\ln\rho-\rho\ln\sigma)$ is the relative entropy between the two quantum states $\rho$ and $\sigma$. It is clear that $\rho_{AB}^{'}$ is a zero-discord state for any given $\{\Pi_i^A\}$, below we derive the lower bound of $\mathcal{WD}_A$.
\begin{eqnarray*}
\mathcal{WD}_A&=&\mathop{\min}_{\{\Pi_i^A\}}S(\rho_{AB}\|\rho_{AB}^{'})\geq\mathop{\min}_{\sigma\in\mathcal{D}_z}S(\rho_{AB}\|\sigma)\\
              &=&S(\rho_{AB}\|\sigma_{min,re})\geq\frac{\|\rho-\sigma_{min,re}\|_1^2}{2\ln2}\\
              &\geq&\mathop{\min}_{\sigma\in\mathcal{D}_z}\frac{\|\rho-\sigma\|_1^2}{2\ln2}=\frac{\|\rho-\sigma_{min,1}\|_1^2}{2\ln2}\\
              &\geq&\frac{\|\rho-\sigma_{min,1}\|_2^2}{2\ln2}\geq\frac{1}{2\ln2}D_G\\
              &\geq&\frac{1}{4\ln2}\sum_{i=1}^{MN}\mathcal{L}_i.
\end{eqnarray*}
Where $\sigma_{min,re}$ and $\sigma_{min,1}$ specify the closest zero-discord states of $\rho$ in the sense of relative entropy and 1-norm, respectively. The second inequality is due to the quantum Pinsker inequality $S(\rho\|\sigma)\geq\frac{\|\rho-\sigma\|_1^2}{2\ln2}$ \cite{oh} and the fourth one is due to $\|\rho-\sigma\|_1\geq\|\rho-\sigma\|_2$ \cite{ccc}.

\section{VI. Conclusions and discussions}

Discord goes beyond entanglement and captures the nonclassical correlations that can exist between parts of a separable quantum state. Based on the partial transposition of the density matrix we have proposed a zero-discord criterion for two-qubit quantum states. The main advantage of this criterion is simplicity, as we only need to check whether the principal minors remain unchanged after partial transposition, and in specific examples, compared with other results \cite{ara}, it proves the accuracy of our criterion in capturing quantum discord. Then we can also get the necessary condition that the discord is zero of $2\otimes n$ system, where we used examples to analyse the relationship between entangled, separable, and discord-free, and once again confirmed a fact: quantum discord exists in a wider range of states that may be separable. Compared with Ref.\cite{bc}, it only needs to detect the second-order principal minor, and does not need to find the canonical factorization of the original density matrix.  Moreover, we have presented an analytical lower bound of geometric quantum discord for arbitrary bipartite quantum states, which give the way to determine quantitatively the degree of quantum discord without the specific form of the closet discord-free state of $\rho$ with respect to the Hilbert-Schmidt distance. By calculation, we have demonstrated that our lower bound is not only an effective approximation of the geometric quantum discord of the given states, but also a productive improvement on the known lower bound \cite{szy}. Finally, we have provided a lower bound of one-way work deficit based on our analytical lower bound of GQD. As a significant quantum correlation, discord plays a pivotal role in various quantum technologies. Our results can deepen our understanding of the methodologies for discord detection and quantization. 

\bigskip
{\bf Acknowledgments:} We are grateful for discussions with Prof. Shao-Ming Fei. This work is supported by the Hainan Provincial Graduate Innovation Research Program under Grant No. Qhys2023-386; the
specific research fund of the Innovation Platform for Academicians of Hainan Province under Grant No. YSPTZX202215; the Hainan Provincial Natural Science Foundation of China under Grant No.121RC539.

\smallskip
{\bf Data Availability Statement:} This manuscript has no associated data.

\end{document}